\def\ddefloop#1{\ifx\ddefloop#1\else\ddef{#1}\expandafter\ddefloop\fi}
\def\ddef#1{\expandafter\def\csname b#1\endcsname{\ensuremath{\mathbf{#1}}}}
\def\ddef#1{\expandafter\def\csname bb#1\endcsname{\ensuremath{\mathbb{#1}}}}
\def\ddef#1{\expandafter\def\csname c#1\endcsname{\ensuremath{\mathcal{#1}}}}
\def\ddef#1{\expandafter\def\csname v#1\endcsname{\ensuremath{\boldsymbol{#1}}}}
\def\ddef#1{\expandafter\def\csname v#1\endcsname{\ensuremath{\boldsymbol{\csname #1\endcsname}}}}
\newcommand\veps{\ensuremath{\varepsilon}}
\renewcommand\t{{\ensuremath{\scriptscriptstyle{\top}}}}
\DeclareMathOperator{\tr}{tr}
\newcommand\wh{\ensuremath{\widehat}}
\renewcommand\v{\ensuremath{\boldsymbol}}
\newtheorem{claim}{Claim}
\newtheorem{theorem}{Theorem}
\theoremstyle{remark}
\newtheorem{remark}{Remark}
\theoremstyle{definition}
\newcommand\parens[1]{(#1)}
\newcommand\norm[1]{\|#1\|}
\newcommand\Parens[1]{\mleft(#1\mright)}
\newcommand\Norm[1]{\mleft\|#1\mright\|}
\newcommand\Braces[1]{\mleft\{#1\mright\}}
\newcommand\sr{\ensuremath{\operatorname{sr}}}
\title{Weighted sampling of outer products}
\author{Daniel Hsu\footnote{Department of Computer Science, Columbia
University. E-mail: \texttt{djhsu@cs.columbia.edu}}}
\begin{document}
\maketitle

\begin{abstract}
  This note gives a simple analysis of the randomized approximation
  scheme for matrix multiplication of~\citet{DKM06} with a particular
  sampling distribution over outer products.
  The result follows from a matrix version of Bernstein's inequality.
  To approximate the matrix product $\vA\vB^\t$ to spectral norm error
  $\veps\norm{\vA}\norm{\vB}$, it suffices to sample on the order of
  $(\sr(\vA) \vee \sr(\vB)) \log(\sr(\vA) \wedge \sr(\vB)) / \veps^2$
  outer products, where $\sr(\vM)$ is the stable rank of a matrix
  $\vM$.
\end{abstract}

\section{Introduction}

Let the matrices $\vA := [\va_1|\va_2|\dotsc|\va_n]$ and $\vB :=
[\vb_1|\vb_2|\dotsc|\vb_n]$ be given.
We are interested in developing an estimator of $\vA\vB^\t$
constructed as a positive linear combination of $m$ randomly chosen
outer products of the form $\va_i\vb_i^\t$.
We'll choose these outer products randomly (with replacement) using a
particular sampling distribution based on the lengths of the vectors
$\va_i$ and $\vb_i$.

A certain sampling distribution proposed by~\citet{DKM06} is optimal
for Frobenius norm error.
To obtain a spectral norm error of $\veps\norm{\vA}\norm{\vB}$, this
approach seems to require
\[
  m \gtrsim
  \frac{\sr(\vA)\sr(\vB) \log(\sr(\vA)\sr(\vB))}{\veps^2}
\]
samples, where
\[
  \sr(\vM) := \frac{\norm{\vM}_F^2}{\norm{\vM}^2}
\]
is the \emph{stable rank} of a matrix $\vM$.
Note that $\sr(\vM)$ is always at most the rank of $\vM$.

Another scheme proposed by~\citet{Sarlos06} first multiplies both
matrices $\vA$ and $\vB$ (on the right) by a random orthogonal matrix,
and then uses uniform random sampling to pick outer products formed
using columns of these randomly rotated matrices.
As shown by~\citet{HKZ12-mult}, for certain classes of random
orthogonal matrices, this scheme requires
\[
  m \gtrsim
  \frac{\Parens{\parens{\sr(\vA)\vee\sr(\vB)} + \log(n)}
  \log\parens{\sr(\vA)\vee\sr(\vB)}} {\veps^2}
\]
samples to obtain a spectral norm error bound of $\veps
\norm{\vA}\norm{\vB}$.%
\footnote{We use the notation $a \vee b = \max\{a,b\}$ and $a \wedge b
= \min\{a,b\}$.}

In this note, we describe a sampling distribution that requires
\[
  m \gtrsim
  \frac{\parens{\sr(\vA)\vee\sr(\vB)} \log\parens{ \sr(\vA) \wedge
  \sr(\vB)}}{\veps^2}
\]
samples to obtain a spectral norm error bound $\veps
\norm{\vA}\norm{\vB}$.
The sampling distribution is very natural; the purpose of this note is
merely to record a simple analysis based on a probability tail
inequality for sums of random matrices.

\section{The sampling scheme and estimator}

Define
\[
  p_i := \frac12 \Parens{
    \frac{\norm{\va_i}^2}{\norm{\vA}_F^2}
    + \frac{\norm{\vb_i}^2}{\norm{\vB}_F^2}
  }
  , \quad \forall i \in [n] .
\]
Observe that $p_i \geq 0$ for all $i \in [n]$, and $\sum_{i=1}^n p_i =
1$; thus $(p_1,p_2,\dotsc,p_n)$ is a valid probability distribution.
This sampling distribution is similar to one
from~\citet{BJS14-lowrank} for a different but related problem.
The proposed estimator has the same form as that of~\citet{DKM06}---it
is the empirical average of $m$ i.i.d.~random matrices $\vX_1, \vX_2,
\dotsc, \vX_m$:
\[
  \wh{\vA\vB^\t} := \frac1m \sum_{i=1}^m \vX_i
\]
where
\[
  \Pr\Braces{
    \vX_1 = \frac1{p_i} \va_i\vb_i^\t
  } = p_i
  , \quad \forall i \in [n] .
\]
(We assume without loss of generality that $p_i > 0$ for all $i \in
[n]$.)

For comparison, the sampling distribution of~\citeauthor{DKM06} has
$p_i \propto \norm{\va_i}\norm{\vb_i}$ but otherwise is the same.

\section{The result}

\begin{theorem}
  \label{thm:main}
  For any $t>0$,
  \begin{multline*}
    \Pr\Braces{
      \frac{\Norm{\wh{\vA\vB^\t} - \vA\vB^\t}}{\norm{\vA}\norm{\vB}}
      >
      \Parens{
        \sqrt{\frac{4\parens{ \sr(\vA) \vee \sr(\vB) } t}{m}}
        + \frac{\parens{\sqrt{\sr(\vA)\sr(\vB)} + 1} t}{m}
      }
    } \\
    \leq 4\parens{ \sr(\vA) \wedge \sr(\vB) }
    \cdot \frac{t}{e^t-t-1}
    .
  \end{multline*}
\end{theorem}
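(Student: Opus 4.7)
The plan is to apply a matrix Bernstein inequality to the i.i.d.\ mean-zero summands $\vY_i := (\vX_i - \vA\vB^\t)/m$, for which $\wh{\vA\vB^\t} - \vA\vB^\t = \sum_{i=1}^m \vY_i$. Unbiasedness of each $\vX_i$ is immediate, $\mathbb{E}[\vX_1] = \sum_j p_j \cdot (1/p_j)\va_j\vb_j^\t = \vA\vB^\t$. Bernstein requires three ingredients: an almost-sure norm bound $R$ on $\vX_i - \vA\vB^\t$, a matrix variance proxy $\sigma^2$, and a dimension factor. I expect these to produce, respectively, the $(\sqrt{\sr(\vA)\sr(\vB)}+1)t/m$ term, the $\sqrt{4(\sr(\vA)\vee\sr(\vB))t/m}$ term, and the $\sr(\vA)\wedge\sr(\vB)$ prefactor of the theorem.

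For $R$, the arithmetic--geometric mean inequality applied to $p_i$ gives $p_i \geq \norm{\va_i}\norm{\vb_i}/(\norm{\vA}_F\norm{\vB}_F)$, so $\norm{\vX_i} = \norm{\va_i}\norm{\vb_i}/p_i \leq \norm{\vA}_F\norm{\vB}_F = \sqrt{\sr(\vA)\sr(\vB)}\,\norm{\vA}\norm{\vB}$; together with $\norm{\vA\vB^\t} \leq \norm{\vA}\norm{\vB}$ and the triangle inequality, this yields $R = (\sqrt{\sr(\vA)\sr(\vB)}+1)\norm{\vA}\norm{\vB}$. For the variance, I would instead use the two one-sided bounds $p_i \geq \norm{\va_i}^2/(2\norm{\vA}_F^2)$ and $p_i \geq \norm{\vb_i}^2/(2\norm{\vB}_F^2)$ separately: direct calculation gives $\mathbb{E}[\vX_1\vX_1^\t] = \sum_i (\norm{\vb_i}^2/p_i)\va_i\va_i^\t \preceq 2\norm{\vB}_F^2 \vA\vA^\t$ and symmetrically $\mathbb{E}[\vX_1^\t\vX_1] \preceq 2\norm{\vA}_F^2 \vB\vB^\t$. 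Since centering only subtracts a PSD matrix, the corresponding bounds pass to $\vY_i$, so $\sigma^2 := \max(\|\sum_i \mathbb{E}\vY_i\vY_i^\t\|,\|\sum_i \mathbb{E}\vY_i^\t\vY_i\|) \leq 2(\sr(\vA)\vee\sr(\vB))\norm{\vA}^2\norm{\vB}^2/m$.

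To get $\sr(\vA)\wedge\sr(\vB)$ rather than the ambient row dimensions, I would pass through the Hermitian dilation $\vH_i$ of $\vY_i$ and use the intrinsic-dimension form of matrix Bernstein. The matrix $\sum_i \mathbb{E}\vH_i^2$ is block-diagonal with spectral norm $\sigma^2$; its trace is at most $4\norm{\vA}_F^2\norm{\vB}_F^2/m$, using the PSD bounds above and $\tr(\vA\vA^\t) = \norm{\vA}_F^2$. The ratio bounds the intrinsic dimension by $2\sr(\vA)\sr(\vB)/(\sr(\vA)\vee\sr(\vB)) = 2(\sr(\vA)\wedge\sr(\vB))$. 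Substituting $R$, $\sigma^2$, and this intrinsic dimension into an intrinsic matrix Bernstein inequality whose Cram\'er--Chernoff tail is of the form $t/(e^t-t-1)$ at the threshold $\sqrt{2\sigma^2 t} + Rt$ then yields the claimed bound, with the leading $4$ coming from combining our $2$ with a factor of $2$ carried by the statement of the inequality.

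The main obstacle I anticipate is purely bookkeeping: selecting (or, if need be, rederiving from the matrix Laplace transform bound) the precise form of intrinsic matrix Bernstein that (i) uses the intrinsic dimension of the dilated variance in place of the ambient dimension, and (ii) presents its tail in the non-standard $t/(e^t-t-1)$ shape with the deviation threshold $\sqrt{2\sigma^2 t} + Rt$ rather than the textbook $\sqrt{2\sigma^2 t} + Rt/3$. Once the right statement is in hand, the two conceptual calculations above—the AM--GM step producing the uniform norm control and the trace-to-norm ratio producing $2(\sr(\vA)\wedge\sr(\vB))$—fit the ingredients into it directly.
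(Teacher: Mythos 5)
Your proposal is correct and follows essentially the same route as the paper: Hermitian dilation, the intrinsic-dimension matrix Bernstein inequality of \citet{HKZ12-matrix}, the AM--GM bound $p_i \geq \norm{\va_i}\norm{\vb_i}/(\norm{\vA}_F\norm{\vB}_F)$ for the almost-sure bound, the one-sided bounds $p_i \geq \tfrac12\norm{\va_i}^2/\norm{\vA}_F^2$ and $p_i \geq \tfrac12\norm{\vb_i}^2/\norm{\vB}_F^2$ for the variance, and the trace-to-norm ratio yielding the intrinsic dimension $2(\sr(\vA)\wedge\sr(\vB))$. The only (immaterial) divergence is that you bound $\tr(\bbE\vZ_1^2)$ by taking traces of the PSD dominators $2\norm{\vB}_F^2\vA\vA^\t$ and $2\norm{\vA}_F^2\vB\vB^\t$, whereas the paper bounds $\sum_i \norm{\va_i}^2\norm{\vb_i}^2/p_i$ directly via a harmonic-mean computation; both give $4\norm{\vA}_F^2\norm{\vB}_F^2$, and the $Rt$ versus $Rt/3$ discrepancy you flag is resolved simply because the theorem states the weaker threshold.
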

\begin{remark}
  In personal communication, John Holodnak and Ilse Ipsen informed me
  that they have also obtained essentially the same bound using this
  sampling scheme.
\end{remark}
\begin{remark}
  Theorem~\ref{thm:main} implies the following.
  There is a constant $c>0$ such that if
  \[
    m \geq c \cdot
    \Parens{
      \frac{
        \sr(\vA) \vee \sr(\vB)
      }{\veps^2}
      +
      \frac{
        \sqrt{\sr(\vA) \sr(\vB)}
      }{\veps}
    }
    \log\parens{ \sr(\vA) \wedge \sr(\vB) }
  \]
  for $\veps \in (0,1)$, then with high probability,
  \[
    \Norm{\wh{\vA\vB^\t} - \vA\vB^\t} \leq \veps\norm{\vA}\norm{\vB}
    .
  \]
\end{remark}
\begin{remark}
  The bound has no explicit dependence on the extrinsic dimensions of
  the $\va_i$ or $\vb_i$.
  We obtain this result by using a version of the matrix Bernstein
  inequality from~\citet{HKZ12-matrix} that depends only on intrinsic
  dimensions (which in this case are $\sr(\vA)$ and $\sr(\vB)$).
  It is straightforward to apply more recent versions of this
  inequality, such as one by \citet{Minsker12}, to obtain somewhat
  sharper probability tails.
\end{remark}

To prove Theorem~\ref{thm:main}, we shall apply a tail inequality for
the spectral norm of sums of symmetric random matrices
from~\citet{HKZ12-matrix}.
Define the symmetric random matrices
\[
  \vZ_i :=
  \begin{bmatrix}
    \v0 & \vX_i - \vA\vB^\t \\
    \vX_i^\t - \vB\vA^\t & \v0
  \end{bmatrix}
  , \quad \forall i \in [n] .
\]
Then
\[
  \Norm{\wh{\vA\vB^\t} - \vA\vB^\t}
  =
  \Norm{\frac1m \sum_{i=1}^m \vZ_i}
  .
\]
We need bounds on the following quantities: $\norm{\vZ_1}$ (with
probability one), $\norm{\bbE\vZ_1^2}$, and $\tr(\bbE\vZ_1^2)$.

\begin{claim}
  \label{claim:asbound}
  With probability one,
  $\norm{\vZ_1} \leq \norm{\vA}_F \norm{\vB}_F + \norm{\vA}\norm{\vB}$.
\end{claim}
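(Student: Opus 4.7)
The plan is to exploit the block-antidiagonal structure of $\vZ_1$ to reduce the spectral-norm bound to a bound on $\|\vX_1 - \vA\vB^\t\|$, then handle the two terms of the triangle inequality separately, with the key inequality being a lower bound on $p_i$ obtained from AM-GM.

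First I would observe that for any matrix $\vM$, the symmetric dilation $\bigl[\begin{smallmatrix}\v0&\vM\\\vM^\t&\v0\end{smallmatrix}\bigr]$ has spectral norm equal to $\|\vM\|$ (its nonzero eigenvalues are $\pm\sigma_j(\vM)$). Applied with $\vM = \vX_1 - \vA\vB^\t$, this gives $\|\vZ_1\| = \|\vX_1 - \vA\vB^\t\|$, which by the triangle inequality is at most $\|\vX_1\| + \|\vA\vB^\t\|$. The second term is immediately bounded by $\|\vA\|\|\vB\|$ by submultiplicativity of the spectral norm.

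The substantive step is the almost-sure bound $\|\vX_1\| \leq \|\vA\|_F\|\vB\|_F$. Conditioned on the outcome $\vX_1 = \tfrac{1}{p_i}\va_i\vb_i^\t$, the matrix is rank one with $\|\vX_1\| = \|\va_i\|\|\vb_i\|/p_i$. I would then apply the arithmetic-geometric mean inequality to the two summands defining $p_i$:
\[
  p_i = \tfrac12\Parens{\frac{\|\va_i\|^2}{\|\vA\|_F^2} + \frac{\|\vb_i\|^2}{\|\vB\|_F^2}} \geq \sqrt{\frac{\|\va_i\|^2\|\vb_i\|^2}{\|\vA\|_F^2\|\vB\|_F^2}} = \frac{\|\va_i\|\|\vb_i\|}{\|\vA\|_F\|\vB\|_F}.
\]
Rearranging yields $\|\va_i\|\|\vb_i\|/p_i \leq \|\vA\|_F\|\vB\|_F$, and since this bound holds uniformly over $i \in [n]$, it holds almost surely for $\|\vX_1\|$.

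Combining the two bounds gives $\|\vZ_1\| \leq \|\vA\|_F\|\vB\|_F + \|\vA\|\|\vB\|$ with probability one. No step here looks like a real obstacle; the only mildly clever move is recognizing that the specific average-of-two form of $p_i$ is tailor-made so that AM-GM produces exactly the $\|\vA\|_F\|\vB\|_F$ denominator that cancels $\|\va_i\|\|\vb_i\|$ in the numerator, which is presumably why the authors chose this sampling distribution over, say, the one of \citet{DKM06} with $p_i \propto \|\va_i\|\|\vb_i\|$.
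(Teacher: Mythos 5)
Your proposal is correct and matches the paper's proof essentially step for step: the same AM--GM lower bound on $p_i$, the same triangle inequality splitting $\|\vX_1 - \vA\vB^\t\|$ into the two terms, and the same (implicit in the paper, explicit in your write-up) fact that the symmetric dilation preserves the spectral norm. Nothing further to add.
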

\begin{proof}
  Observe that
  \begin{align*}
    p_i
    & = \frac12 \Parens{
      \frac{\norm{\va_i}^2}{\norm{\vA}_F^2}
      + \frac{\norm{\vb_i}^2}{\norm{\vB}_F^2}
    }
    \\
    & = \frac{
      \norm{\vB}_F^2\norm{\va_i}^2
      + \norm{\vA}_F^2\norm{\vb_i}^2
    }{
      2\norm{\vA}_F^2\norm{\vB}_F^2
    }
    \\
    & \geq \frac{
      2\norm{\vB}_F\norm{\va_i}
      \norm{\vA}_F\norm{\vb_i}
    }{
      2\norm{\vA}_F^2\norm{\vB}_F^2
    }
    \\
    & =
    \frac{
      \norm{\va_i}\norm{\vb_i}
    }{
      \norm{\vA}_F\norm{\vB}_F
    }
  \end{align*}
  for all $i \in [n]$, where the inequality relates arithmetic and
  geometric means.
  Therefore
  \begin{align*}
    \Norm{\frac1{p_i} \va_i\vb_i^\t - \vA\vB^\t}
    & \leq
    \frac1{p_i} \norm{\va_i\vb_i^\t} + \norm{\vA\vB^\t}
    \\
    & \leq
    \frac1{p_i} \norm{\va_i}\norm{\vb_i} + \norm{\vA}\norm{\vB}
    \\
    & \leq
    \norm{\vA}_F\norm{\vB}_F + \norm{\vA}\norm{\vB}
    .
  \end{align*}
  This means that $\norm{\vZ_1} \leq \norm{\vA}_F\norm{\vB}_F +
  \norm{\vA}\norm{\vB}$ with probability one.
\end{proof}

\begin{claim}
  \label{claim:variance}
  $\norm{\bbE\vZ_1^2} \leq 2\parens{ \sr(\vA) \vee \sr(\vB) }
  \norm{\vA}^2\norm{\vB}^2$.
\end{claim}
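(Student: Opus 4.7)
The plan is to use the block structure of $\vZ_1$ to reduce the norm bound to two separate second-moment bounds, one for each diagonal block of $\vZ_1^2$, and then apply the two individual lower bounds on $p_i$ coming from the two summands in its definition.

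First I would observe that, because $\vZ_1$ has the $2\times 2$ antidiagonal block form stated above, $\vZ_1^2$ is block diagonal with diagonal blocks
\[
(\vX_1-\vA\vB^\t)(\vX_1-\vA\vB^\t)^\t
\qquad\text{and}\qquad
(\vX_1-\vA\vB^\t)^\t(\vX_1-\vA\vB^\t),
\]
so $\Norm{\bbE\vZ_1^2}$ equals the maximum of the spectral norms of the expectations of these two PSD matrices. Next I would drop the rank-one correction in the standard variance identity: since
\[
\bbE\brackets{(\vX_1-\vA\vB^\t)(\vX_1-\vA\vB^\t)^\t}
= \bbE\brackets{\vX_1\vX_1^\t} - \vA\vB^\t\vB\vA^\t
\preceq \bbE\brackets{\vX_1\vX_1^\t},
\]
and analogously for the other block, it suffices to bound $\Norm{\bbE\vX_1\vX_1^\t}$ and $\Norm{\bbE\vX_1^\t\vX_1}$.

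Second, I would expand
\[
\bbE\brackets{\vX_1\vX_1^\t}
= \sum_{i=1}^n \frac{\norm{\vb_i}^2}{p_i}\,\va_i\va_i^\t,
\qquad
\bbE\brackets{\vX_1^\t\vX_1}
= \sum_{i=1}^n \frac{\norm{\va_i}^2}{p_i}\,\vb_i\vb_i^\t,
\]
and then pick, for each sum, the appropriate summand of $p_i$ to keep. Namely, from $p_i \geq \tfrac12 \norm{\vb_i}^2/\norm{\vB}_F^2$ I would get $\norm{\vb_i}^2/p_i \leq 2\norm{\vB}_F^2$, hence $\bbE\vX_1\vX_1^\t \preceq 2\norm{\vB}_F^2\,\vA\vA^\t$, so its norm is at most $2\sr(\vB)\norm{\vA}^2\norm{\vB}^2$. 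Symmetrically, $p_i \geq \tfrac12\norm{\va_i}^2/\norm{\vA}_F^2$ yields $\Norm{\bbE\vX_1^\t\vX_1} \leq 2\sr(\vA)\norm{\vA}^2\norm{\vB}^2$. Taking the maximum of the two block bounds gives the claimed $2(\sr(\vA)\vee\sr(\vB))\norm{\vA}^2\norm{\vB}^2$.

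There is no real obstacle here: the only step requiring care is to match each block with the correct of the two $p_i$ lower bounds, so that the Frobenius norm of one matrix cancels against the corresponding column norms while the outer-product structure of the other matrix survives as a spectral norm. This is exactly what makes the averaged sampling distribution $p_i$ (rather than, say, the $\norm{\va_i}\norm{\vb_i}$ distribution of~\citet{DKM06}) work: both second-moment bounds are available simultaneously.
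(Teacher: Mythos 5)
Your proposal is correct and follows essentially the same route as the paper's proof: block-diagonalize $\vZ_1^2$, drop the PSD correction terms, and bound each block's second moment by matching it with the appropriate one of the two lower bounds on $p_i$. No gaps.
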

\begin{proof}
  First, it is easy to see that $\bbE\vX_1 = \vA\vB^\t$.

  Now observe that $\vZ_1^2$ is symmetric positive semidefinite, and
  \begin{align*}
    \bbE\vZ_1^2
    & =
    \begin{bmatrix}
      \sum_{i=1}^n \frac1{p_i} \norm{\vb_i}^2 \va_i\va_i^\t -
      \vA\vB^\t\vB\vA^\t & \v0 \\
      \v0 & \sum_{i=1}^n \frac1{p_i} \norm{\va_i}^2 \vb_i\vb_i^\t -
      \vB\vA^\t\vA\vB^\t
    \end{bmatrix}
    \\
    & \preceq
    \begin{bmatrix}
      \sum_{i=1}^n \frac1{p_i} \norm{\vb_i}^2 \va_i\va_i^\t & \v0 \\
      \v0 & \sum_{i=1}^n \frac1{p_i} \norm{\va_i}^2 \vb_i\vb_i^\t
    \end{bmatrix}
    .
  \end{align*}
  Therefore
  \[
    \norm{\bbE\vZ_1^2}
    \leq \Parens{
      \lambda_{\max}\Parens{\sum_{i=1}^n \frac1{p_i} \norm{\vb_i}^2 \va_i\va_i^\t}
      \vee
      \lambda_{\max}\Parens{\sum_{i=1}^n \frac1{p_i} \norm{\va_i}^2 \vb_i\vb_i^\t}
    }
    .
  \]
  We have that
  \[
    \sum_{i=1}^n \frac1{p_i} \norm{\vb_i}^2 \va_i\va_i^\t
    \preceq \max\Braces{ \frac1{p_i} \norm{\vb_i}^2 : i \in [n] }
    \Parens{\sum_{i=1}^n \va_i\va_i^\t}
    \preceq 2\norm{\vB}_F^2 \vA\vA^\t
  \]
  since $p_i \geq 0.5 \norm{\vb_i}^2/\norm{\vB}_F^2$ for each $i \in
  [n]$.
  Therefore
  \[
    \Norm{\sum_{i=1}^n \frac1{p_i} \norm{\vb_i}^2 \va_i\va_i^\t}
    \leq 2\norm{\vB}_F^2 \norm{\vA}^2 .
  \]
  Similarly,
  \[
    \Norm{\sum_{i=1}^n \frac1{p_i} \norm{\va_i}^2 \vb_i\vb_i^\t}
    \leq 2\norm{\vA}_F^2 \norm{\vB}^2 .
  \]
  This means that
  \begin{align*}
    \norm{\bbE\vZ_1^2}
    & \leq 2\parens{
      \norm{\vB}_F^2 \norm{\vA}^2 \vee
      \norm{\vA}_F^2 \norm{\vB}^2
    }
    \\
    & = 2\parens{ \sr(\vA) \vee \sr(\vB) } \norm{\vA}^2\norm{\vB}^2
    .
    \qedhere
  \end{align*}
\end{proof}

\begin{claim}
  \label{claim:trace}
  $\tr(\bbE\vZ_1^2) \leq 4\sr(\vA)\sr(\vB)\norm{\vA}^2\norm{\vB}^2$.
\end{claim}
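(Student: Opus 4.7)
The plan is to exploit the block-diagonal expression for $\bbE\vZ_1^2$ already derived in the proof of Claim~\ref{claim:variance}, and then invoke one of the two elementary lower bounds on $p_i$ built into the sampling distribution. Since $\bbE\vZ_1^2$ is positive semidefinite and trace is monotone with respect to the Löwner order on PSD matrices, I can apply $\tr$ to both sides of the semidefinite upper bound
\[
  \bbE\vZ_1^2 \preceq
  \begin{bmatrix}
    \sum_{i=1}^n \frac{1}{p_i} \norm{\vb_i}^2 \va_i\va_i^\t & \v0 \\
    \v0 & \sum_{i=1}^n \frac{1}{p_i} \norm{\va_i}^2 \vb_i\vb_i^\t
  \end{bmatrix}
\]
from the previous proof. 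Using $\tr(\va_i\va_i^\t) = \norm{\va_i}^2$ and $\tr(\vb_i\vb_i^\t) = \norm{\vb_i}^2$, both diagonal blocks collapse to the same scalar sum, giving
\[
  \tr(\bbE\vZ_1^2) \leq 2\sum_{i=1}^n \frac{\norm{\va_i}^2\norm{\vb_i}^2}{p_i}.
\]

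Next I would bound each summand by using the trivial inequality $p_i \geq \tfrac{1}{2}\norm{\va_i}^2/\norm{\vA}_F^2$ that follows directly from the definition of $p_i$. This yields $\norm{\va_i}^2\norm{\vb_i}^2 / p_i \leq 2\norm{\vA}_F^2\norm{\vb_i}^2$, so summing over $i$ turns the right-hand side into $4\norm{\vA}_F^2\norm{\vB}_F^2$. Rewriting in terms of stable ranks via $\norm{\vA}_F^2\norm{\vB}_F^2 = \sr(\vA)\sr(\vB)\norm{\vA}^2\norm{\vB}^2$ gives the claimed bound.

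There is essentially no obstacle here: the only design choice is which of the two analogous lower bounds $p_i \geq \tfrac{1}{2}\norm{\va_i}^2/\norm{\vA}_F^2$ or $p_i \geq \tfrac{1}{2}\norm{\vb_i}^2/\norm{\vB}_F^2$ to apply, and either works by the symmetry of $p_i$. One could alternatively invoke the AM--GM bound $p_i \geq \norm{\va_i}\norm{\vb_i}/(\norm{\vA}_F\norm{\vB}_F)$ from the proof of Claim~\ref{claim:asbound} together with Cauchy--Schwarz to shave a factor of $2$ from the constant, but the cruder single-term bound already delivers the stated inequality.
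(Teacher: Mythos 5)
Your proof is correct and follows essentially the same route as the paper: both reduce to $\tr(\bbE\vZ_1^2) \leq 2\sum_i \norm{\va_i}^2\norm{\vb_i}^2/p_i$ and then bound each term using (what amounts to) $p_i \geq \tfrac12\norm{\va_i}^2/\norm{\vA}_F^2$, the paper just phrasing this step as dropping one term of a harmonic mean. Your aside about sharpening the constant to $2$ via the AM--GM bound and Cauchy--Schwarz is also correct, though unnecessary for the claim.
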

\begin{proof}
  Using the expression for $\bbE\vZ_1^2$ from
  Claim~\ref{claim:variance}, we observe that
  \begin{align*}
    \tr(\bbE\vZ_1^2)
    & = 2 \sum_{i=1}^n \frac1{p_i} \norm{\va_i}^2 \norm{\vb_i}^2
    - 2\tr(\vA\vB^\t\vB\vA^\t)
    \\
    & \leq 2 \sum_{i=1}^n \frac1{p_i} \norm{\va_i}^2 \norm{\vb_i}^2
    \\
    & = 2 \sum_{i=1}^n \frac{2\norm{\va_i}^2\norm{\vb_i}^2}
    {
      \frac{\norm{\va_i}^2}{\norm{\vA}_F^2}
      + \frac{\norm{\vb_i}^2}{\norm{\vB}_F^2}
    }
    \\
    & = 2 \sum_{i=1}^n \frac{2}
    {
      \frac{1}{\norm{\vA}_F^2\norm{\vb_i}^2}
      + \frac{1}{\norm{\vB}_F^2\norm{\va_i}^2}
    }
    \\
    & \leq 4 \sum_{i=1}^n \norm{\vA}_F^2\norm{\vb_i}^2
    \\
    & = 4 \norm{\vA}_F^2 \norm{\vB}_F^2
    \\
    & = 4 \sr(\vA) \sr(\vB) \norm{\vA}^2 \norm{\vB}^2
    .
    \qedhere
  \end{align*}
\end{proof}

\begin{proof}[Proof of Theorem~\ref{thm:main}]
  We apply the matrix Bernstein inequality from~\citet{HKZ12-matrix}.
  We have from Claims~\ref{claim:asbound},~\ref{claim:variance},
  and~\ref{claim:trace},
  \begin{align*}
    \norm{\vZ_i}
    & \leq \norm{\vA}_F \norm{\vB}_F + \norm{\vA}\norm{\vB} =: \bar{b}
    \quad \text{(with probability one)} , \\
    \norm{\bbE\vZ_i^2}
    & \leq 2\parens{ \sr(\vA) \vee \sr(\vB) } \norm{\vA}^2\norm{\vB}^2
    =: \bar\sigma^2 , \\
    \tr(\bbE\vZ_1^2)
    & \leq \bar\sigma^2\bar{k}
  \end{align*}
  for
  \[
    \bar{k}
    := \frac{4\sr(\vA)\sr(\vB)\norm{\vA}^2\norm{\vB}^2}
    {\bar\sigma^2}
    = \frac{4\sr(\vA)\sr(\vB)\norm{\vA}^2\norm{\vB}^2}
    {2\parens{ \sr(\vA) \vee \sr(\vB) } \norm{\vA}^2\norm{\vB}^2}
    = 2\parens{\sr(\vA)\wedge\sr(\vB)}
    .
  \]
  Therefore, by the matrix Bernstein inequality from
  \citet{HKZ12-matrix} and a union bound,
  \[
    \Pr\Braces{
      \Norm{
        \frac1m \sum_{i=1}^m \vZ_i
      } > \sqrt{\frac{2\bar\sigma^2t}{m}} + \frac{\bar{b}t}{3m}
    } \leq  2\bar{k} \cdot \frac{t}{e^t-t-1}
    .
    \qedhere
  \]
\end{proof}

\bibliography{bib}

\begin{thebibliography}{6}
\providecommand{\natexlab}[1]{#1}
\providecommand{\url}[1]{\texttt{#1}}
\expandafter\ifx\csname urlstyle\endcsname\relax
  \providecommand{\doi}[1]{doi: #1}\else
  \providecommand{\doi}{doi: \begingroup \urlstyle{rm}\Url}\fi

\bibitem[Bhojanapalli et~al.(2014)Bhojanapalli, Jain, and
  Sanghavi]{BJS14-lowrank}
Srinadh Bhojanapalli, Prateek Jain, and Sujay Sanghavi.
\newblock Tighter low-rank approximation via sampling the leveraged element.
\newblock \emph{CoRR}, abs/1410.3886, 2014.

\bibitem[Drineas et~al.(2006)Drineas, Kannan, and Mahoney]{DKM06}
Petros Drineas, Ravi Kannan, and Michael~W. Mahoney.
\newblock Fast {M}onte {C}arlo algorithms for matrices. {I}. {A}pproximating
  matrix multiplication.
\newblock \emph{SIAM Journal on Computing}, 36\penalty0 (1):\penalty0 132--157,
  2006.

\bibitem[Hsu et~al.(2012{\natexlab{a}})Hsu, Kakade, and Zhang]{HKZ12-matrix}
Daniel Hsu, Sham~M. Kakade, and Tong Zhang.
\newblock Tail inequalities for sums of random matrices that depend on the
  intrinsic dimension.
\newblock \emph{Electronic Communications in Probability}, 17\penalty0
  (14):\penalty0 1--13, 2012{\natexlab{a}}.

\bibitem[Hsu et~al.(2012{\natexlab{b}})Hsu, Kakade, and Zhang]{HKZ12-mult}
Daniel Hsu, Sham~M. Kakade, and Tong Zhang.
\newblock Analysis of a randomized approximation scheme for matrix
  multiplication.
\newblock \emph{CoRR}, abs/1211.5414, 2012{\natexlab{b}}.

\bibitem[Minsker(2011)]{Minsker12}
Stanislav Minsker.
\newblock On some extensions of {Bernstein's} inequality for self-adjoint
  operators.
\newblock \emph{ArXiv e-prints}, 1112.5448, 2011.

\bibitem[Sarl{\'o}s(2006)]{Sarlos06}
Tam{\'a}s Sarl{\'o}s.
\newblock Improved approximation algorithms for large matrices via random
  projections.
\newblock In \emph{Proceedings of the 47th Annual IEEE Symposium on Foundations
  of Computer Science}, 2006.

\end{thebibliography}
\bibliographystyle{plainnat}

\end{document}